\newcommand{\norm}[1]{\left\lVert#1\right\rVert}
\newtheorem{theorem}{Theorem}
\newtheorem{lemma}{Lemma}
\newtheorem{assumption}{Assumption}
\title{\LARGE \bf
Momentum-based Accelerated Q-learning
}
\author{Bowen Weng, Lin Zhao, Huaqing Xiong and Wei Zhang
\thanks{This work was supported in part by the National Science Foundation under grant CNS-1552838.}
\thanks{Bowen Weng and Huaqing Xiong are with the Department of Electrical and Computer Engineering, The Ohio State University, Columbus, OH 43210.
        {\tt\small weng.172, xiong.309}}%
\thanks{Lin Zhao is with the Pittsburgh Technology Center, Aptiv PLC.
        {\tt\small zhao.833@osu.edu}}%
\thanks{Wei Zhang is with the Department of Mechanical and Energy Engineering, Southern University of Science and Technology (SUSTech), Shenzhen, China. 
        {\tt\small zhangw3@sustech.edu.cn}}%
}
\begin{document}

\maketitle
\thispagestyle{empty}
\pagestyle{empty}

\begin{abstract}
This paper studies accelerated algorithms for Q-learning. We propose an acceleration scheme by incorporating the historical iterates of the Q-function. The idea is conceptually inspired by the momentum-based acceleration methods in the optimization theory. Under finite state-action space settings, the proposed accelerated Q-learning algorithm provably converges to the global optimum with a rate of $\mathcal{O}(1/\sqrt{T})$. While sharing a comparable theoretic convergence rate with the existing Speedy Q-learning (SpeedyQ) algorithm, we numerically show that the proposed algorithm outperforms SpeedyQ via playing the FrozenLake grid world game. 
Furthermore, we generalize the acceleration scheme to the continuous state-action space case where function approximation of the Q-function is necessary. In this case, the algorithms are validated using commonly adopted testing problems in reinforcement learning, including two discrete-time linear quadratic regulation (LQR) problems from the Deepmind Control Suite, and the Atari 2600 games. Simulation results show that the proposed accelerated algorithms can improve the convergence performance compared with the vanilla Q-learning algorithm.
\end{abstract}

\section{INTRODUCTION}

Reinforcement learning (RL) aims to study how an agent learns a policy through interacting with its environment to minimize the accumulative loss for a task. RL has received dramatically growing attention and gained success in various tasks, such as playing video games~\cite{mnih2013playing}, bipedal walking~\cite{castillo2018reinforcement} and studying control systems~\cite{lewis2012reinforcement}, to name a few. This paper focuses on the Q-learning algorithm which is a model-free RL algorithm to find an estimate of the optimal action value function. 

Ever since the first proposal of the Q-learning algorithm in 1989~\cite{watkins1992q}, the method has been studied extensively in the finite state-action space. When the size of the state-action space is relatively small, the Q-function can be explicitly represented as a tabular function leading to a convenient proof of convergence~\cite{jaakkola1994convergence}. 

When the state-action space is continuous or considerably large, Q-learning usually requires function approximations. Lewis et. al. studied the Q-learning problem for linear control systems~\cite{al2007model} and extended the method to the continuous time domain~\cite{vamvoudakis2017q,vrabie2009adaptive}. They consider value iteration with appropriate sampling and customized Q-function structure. The step of {\em target update} was later introduced in the Deep Q-Network (DQN) learning~\cite{mnih2013playing} with the Q-function being parameterized as a deep neural network. DQN has gained great success in playing video games~\cite{oh2015action} that significantly exceeds human-level of performance. This also leads to various improved algorithms for Q-learning~\cite{van2016deep} and for general RL~\cite{schulman2017proximal}. 

Besides the exploration on the improved Q-learning algorithms with better performance in applications, another line of research lies in the convergence analysis of variants of the Q-learning algorithms~\cite{azar2011speedy,beck2012error,bertsekas1996neuro,devraj2017fastest,even2003learning}. Given that the training speed largely determines how an algorithm can contribute to the real application, accelerating the convergence is always of great interest. Optimization theory has provided effective schemes of acceleration with theoretic guarantees. One of the most popular schemes is based on the so-called momentum idea by involving more historical information into the update. Momentum-based algorithms, including Heavy-ball (HB)~\cite{polyak1964some}, Nesterov's accelerated gradient (NAG)~\cite{nesterov2013introductory}, have been proved to be able to accelerate the convergence when loss functions are strongly convex~\cite{ghadimi2015global,nesterov2013introductory}. Under general convex or nonconvex settings, the acceleration of these algorithms has not been established theoretically. Nevertheless, its convergence can still be guaranteed for specific classes of nonconvex loss functions~\cite{xiong2018analytical} and numerical results also show great success~\cite{attouch2017rate,dozat2016incorporating}. A successful application of the momentum to accelerate Q-learning is the so-called Speedy Q-learning (SpeedyQ)~\cite{azar2011speedy}. It is provably better than the vanilla Q-learning assuming a finite state-action space.

Our contribution in this paper is twofold. First, we propose a new accelerated Q-learning scheme which is inspired by the general momentum-based optimization algorithms. Under the finite state-action space and the synchronous sampling settings~\cite{even2003learning} , we prove the upper bounds of the convergence rate which is comparable to that of SpeedyQ. Furthermore, using a popular grid world game, we numerically show that the proposed acceleration scheme can outperform SpeedyQ under the same settings as considered in the theoretical derivations. Second, we also generalize our acceleration scheme to the continuous state-action space case, where the Q-function is usually approximated by parametric functions. In this case, we numerically evaluate the proposed algorithms in various challenging tasks, including two linear quadratic regulation problems from the Deepmind Control Suite~\cite{tassa2018deepmind} and the Atari 2600 video games. A significant improvement of the performance over the vanilla Q-learning is shown by the simulation results.


The rest of the paper is organized as follows. Section~\ref{sec: tabularQ} introduces the background of Q-learning and SpeedyQ. Section~\ref{sec:AQL} proposes a new acceleration scheme, followed by the convergence analysis and numerical performance comparison with SpeedyQ. In Section~\ref{sec:ctnQ}, we generalize our acceleration scheme to the case where the state-action space is continuous or considerably large. We also provide numerical results to show the promising applications of our algorithms in more complicated applications.

\section{Preliminaries}\label{sec: tabularQ}

In this section, we provide the background of Q-learning. We also briefly revisit the SpeedyQ algorithm for comparison in later sections.

\subsection{Q-learning}
We consider the standard reinforcement learning settings, where a learning agent (e.g. controller or control policy) interacts with a (possibly stochastic) environment (e.g. process or system dynamics, etc.). This interaction is usually modeled as a discrete-time discounted Markov Decision Processes (MDPs), described by a quintuple $(\mathcal{X},\mathcal{U},P,R, \gamma)$, where $\mathcal{X}$ is the state space, $\mathcal{U}$ is the action space, $P:\mathcal{X}\times \mathcal{U} \times \mathcal{X}\mapsto [0,1]$ is the probability kernel for the state transitions, e.g., $P(\cdot|x, u)$ denotes the probability distribution of the next state given current state $x$ and action $u$. In addition, $R: \mathcal{X}\times \mathcal{U}\mapsto[0,R_{\max}]$ is the reward function (or negative of the cost function) mapping station-action pairs to a bounded subset of $\mathbb{R}$, and $\gamma\in (0,1)$ is the discount factor. The optimal stationary policy $\pi^{\star}: \mathcal{X}\mapsto \mathcal{U}$ of MDP is defined as the solution of following optimization problem:
\begin{align}
    & \underset{\pi}{\text{maximize}}
    & & J_{\pi}(x_0) = {\mathbb{E}_P}\left\{\sum_{k=0}^{\infty} \gamma^k R(x_k, \pi(x_k))\right\}, \nonumber\\
    & \text{subject to}
    & & x_{k+1} \sim P(\cdot|x_k, \pi(u_k)),\label{eq:systemEquation}
\end{align}
where $\mathbb{E}_P$ denotes the expectation with respect to the transition probability $P$.
The above optimization problem seeks to maximize the expected accumulated discounted rewards over different policies $\pi$.

A stationary policy $\pi$ induces a Q-function $Q^\pi$ which satisfies the \textit{Bellman equation}:
\[
Q^{\pi}(x,u):=R(x,u)+\gamma\mathbb{E}_P Q^{\pi}(x^{\prime},\pi(x^{\prime})),
\]
where $x^{\prime}\sim P(\cdot|x,u)$ denotes the next state.
 
The Bellman operator $\mathcal{T}$ is defined pointwisely as
\begin{equation}
    \label{eq:BellmanOperator}
    \mathcal{T}Q(x,u) = R(x,u)+\gamma\mathbb{E}_P \underset{u^\prime \in U(x^\prime)}{\text{max}}Q((x^\prime,u^\prime),
\end{equation}
which can be shown to be a contractive in the supremum norm (i.e., $\norm{Q}:=\sup_{x,u}|Q(x,u)|$) 
\begin{equation}
    \label{eq:Contraction}
    \norm{\mathcal{T}Q(x,u) - \mathcal{T}Q'(x,u)} \leq \gamma\norm{Q(x,u) - Q'(x,u)},
\end{equation}
and its unique fixed point is the optimal Q-function $Q^{\star}$, i.e., $\mathcal{T}Q^{\star}(x,u)=Q^{\star}(x,u)$, which also satisfies the  
 \textit{optimal Bellman equation}~\cite{bertsekas1996neuro}: 
\begin{equation}
    \label{eq:OptimalBellmanEq}
    Q^{\star}(x,u) = R(x,u)+\gamma\mathbb{E}_P \underset{u^\prime \in U(x^\prime)}{\text{max}}Q^{\star}(x^\prime,u^\prime),
\end{equation}
Therefore, starting with an arbitrary Q-function, we can apply the Bellman operator $\mathcal{T}$ iteratively to learn $Q^{\star}$. 

Let $J^{\star}(x):=J_{\pi^{\star}}(x)$ be the optimal value function when applying the optimal policy $\pi^{\star}$. It relates to $Q^{\star}$ as follows
\begin{equation}
    \label{eq:bellQ}
    J^{\star}(x)=\underset{u\in U(x)}{\text{max}}Q^{\star}(x,u), \forall x\in\mathcal{X},
\end{equation}
where $U(x)$ denotes the admissible set of actions at state $x$. Hence, the optimal policy can be obtained from the optimal Q-function as:
\begin{equation}
    \label{eq:optPol}
    \pi^{\star}(x) = \underset{u \in U(x)}{\text{argmax}}\ Q^{\star}(x,u),\forall x\in\mathcal{X}
\end{equation}
Note that the knowledge of the transition probability $P$ is not needed in~\eqref{eq:optPol}, which is the advantage of Q learning.

In practice, exact evaluation of the Bellman operator~\eqref{eq:BellmanOperator} is usually infeasible due to the lack of the knowledge of the system dynamics (i.e. the transition probabilities). Instead, the \textit{empirical Bellman operator} is evaluated using samples~\cite{jaakkola1994convergence}. 
Specifically, for the $k$th round of iteration at state-action pair $(x,u)$, we sample the next state $y_k\sim P(\cdot|x,u)$, and then evaluate the empirical Bellman operator $\mathcal{T}_{k}$ as
\begin{equation}\label{eq:empTQ}
    \mathcal{T}_{k}Q_k(x,u) = R(x,u)+\gamma \underset{u'\in U(y_k)}{\max}Q_k(y_k,u'),
\end{equation}
where note that the subscript of $\mathcal{T}_k$ is to track that of samples $y_k$. As an example, the vanilla Q-learning is implemented as
\begin{equation}\label{eq:tabular q learning}
    Q_{k+1} =  Q_k  - \alpha_k(Q_k - \mathcal{T}_{k}Q_k),
\end{equation}
where $\alpha_k$ is the step size and we omit the dependence on $(x,u)$ hereafter when no confusion can arise. 



\subsection{Speedy Q-learning}
In optimization and deep learning, momentum-based schemes, including Heavy-ball (HB)~\cite{polyak1964some} and Nesterov's accelerated gradient (NAG)~\cite{nesterov2013introductory}, have been widely used to accelerate the convergence of gradient based algorithms. Such schemes also inspired some improved Q-learning algorithms such as SpeedyQ~\cite{azar2011speedy}, which follows the update as:
\begin{equation}\label{eq:sql}
    Q_{k+1} = Q_k + \alpha_k(\mathcal{T}_{k}Q_k - Q_k) + (1-\alpha_k)(\mathcal{T}_{k}Q_k - \mathcal{T}_{k}Q_{k-1}),
\end{equation}
where $\alpha_k=\frac{1}{k+1}$. Compared with~\eqref{eq:tabular q learning}, SpeedyQ added a momentum term $\mathcal{T}_{k}Q_k - \mathcal{T}_{k}Q_{k-1}$. This is a straightforward setup considering the history momentum. In the following section, we further explore the accelerated Q-learning framework with a more sophisticated design.

\section{Accelerated Q-learning}\label{sec:AQL}
In this section, we propose a new class of accelerated Q-learning (AQL) algorithms inspired by general momentum-based optimization algorithms.
A generic form of AQL is given by
\begin{equation}
    \begin{aligned}
    \label{eq: acc tabular q learning}
    & S_k = (1-a_k)Q_{k-1} + a_k\mathcal{T}_{k}Q_{k-1},\\
    & P_{k} = (1-a_k)Q_k + a_k\mathcal{T}_{k}Q_k, \\
    & Q_{k+1} = P_{k} + b_k(P_{k} - S_{k}) + c_k(Q_{k}-Q_{k-1}).
    \end{aligned}
\end{equation}
where $a_k,b_k,c_k$ are the step sizes or learning rates.
In this paper, we will mainly consider the synchronous sampling, where all the state-action pairs are updated simultaneously at each iteration round~\cite{even2003learning}. The pseudo code of the implementation is listed in Algorithm~\ref{alg:AQL}. Note that we used the notation $\mathcal{M}Q_{k}(y_{k}):=\max_{u\in U(y_k)}Q_k(y_k,u)$.

\begin{algorithm}
\caption{\label{alg:AQL}Synchronous Accelerated Q-learning}

\begin{tabbing}
\noindent\textbf{Input:} Initial action-value function $Q_0$ and $Q_{-1}=Q_0$,\\
discount factor $\gamma$, parameter $m\geq \frac{1}{\gamma}$, and maximum \\
iteration number $T$\\
\textbf{for} \= $k=0,1,2,\cdots, T-1$ \textbf{do} \\
\> $a_k = \frac{1}{k+1},\ b_{k}=k-m-1,\ c_{k}=\frac{-k^2+(m+1)k+1}{k+1}$; \\
\>\textbf{for} \= each $(x,u)\in \mathcal{X}\times U(x)$ \textbf{do}\\
\> \> Generate the next state sample $y_k\sim P(\cdot|x,u);$\\
\> \> $\mathcal{T}_{k}Q_{k-1}(x,u)= R(x,u)+\gamma\mathcal{M}Q_{k-1}(y_{k});$ \\
\> \> $\mathcal{T}_{k}Q_{k}(x,u)= R(x,u)+\gamma\mathcal{M}Q_{k}(y_{k});$ \\
\> \> $S_{k}(x,u)= (1-a_{k})Q_{k-1}(x,u)+a_{k}\mathcal{T}_{k}Q_{k-1}(x,u)$ \\
\> \> $P_{k}(x,u)= (1-a_{k})Q_{k}(x,u)+a_{k}\mathcal{T}_{k}Q_{k}(x,u)$ \\
\> \> $Q_{k+1}(x,u)= $ \= $P_{k}(x,u)+b_k\left(P_{k}(x,u)-S_{k}(x,u) \right)$ \\
\> \> \> $+c_k(Q_k(x,u)-Q_{k-1}(x,u))$ \\
\> \textbf{end for} \\
\textbf{end for}\\
\textbf{Output:} $Q_T$
\end{tabbing}
\end{algorithm}

To facilitate the analysis, we rewrite~\eqref{eq: acc tabular q learning} in a more compact way as
\begin{equation}\label{eq:compactAQL}
    \begin{aligned}
    Q_{k+1}= & (1\!-\!a_{k})Q_{k}\!+\!\left[b_{k}(1-a_{k})\!+\!c_{k}\right](Q_{k}-Q_{k-1}) \\
    & +a_{k}\left[(1+b_{k})\mathcal{T}_{k}Q_{k}-b_{k}\mathcal{T}_{k}Q_{k-1}\right].
    \end{aligned}
\end{equation}
Comparing~\eqref{eq:compactAQL} with the SpeedyQ given in~\eqref{eq:sql}, we notice that first, SpeedyQ only contains $\mathcal{T}_{k}Q_{k-1}$ in the update without explicitly using the historical information $Q_{k-1}$. This additional term in our algorithm may help attenuate possible large overshoots during the iteration. Second,~\eqref{eq:sql} simply involves $\mathcal{T}_{k}Q_k - \mathcal{T}_{k}Q_{k-1}$ as the only momentum term, while our algorithm designs this part in a more careful manner. That is, we first use two consecutive outputs of the empirical Bellman operators to update the Q-function and obtain $S_k$ and $P_k$. Intuitively, since $S_k$ and $P_k$ are derived by the update of the vanilla Q-learning, selecting $S_k-P_k$ as the additional momentum term can contribute to a better estimation of the optimal Q-function while preserving the acceleration. This intuition is also verified in our numerical results, which will be shown later. Before that, we first provide convergence analysis of the proposed algorithm.

\subsection{Convergence Rate Analysis of AQL}
Our analysis is based on the finite state-action space assumption, which is the same as in~\cite{azar2011speedy}.

\begin{assumption}\label{asp:boundSpace}
    The state space $\mathcal{X}$ and the action space $\mathcal{U}$ are finite sets with cardinalities $|\mathcal{X}|$ and $|\mathcal{U}|$, respectively. We denote $n=|\mathcal{X}|\cdot|\mathcal{U}|$. 
\end{assumption}

Our analysis starts with analyzing the errors of approximating the exact Bellman operator $\mathcal{T}$ with empirical Bellman operators $\mathcal{T}_k$. These stochastic errors and their evaluations over time are major challenges in proving convergence and deriving convergence rate. 

For convenience, we denote all $\mathcal{T}_k$ terms in~\eqref{eq:compactAQL} by
\begin{equation} 
    \mathcal{D}_k\left[Q_{k},Q_{k-1}\right]:=(1+b_k)\mathcal{T}_{k}Q_k-b_k\mathcal{T}_{k}Q_{k-1}, \label{eq:Dk}
\end{equation}
for all $k\geq 0$. Note that~\eqref{eq:Dk} is a function of all samples $\{y_1,y_2,\cdots, y_k\}$ for all station-action pair $(x,u)$ up to round $k$. Let $\mathcal{F}_k$ denote the filtration generated by the sequence of these random variables $\{y_1,y_2,\cdots, y_k\}$. 
Then if we define $\mathcal{D}\left[Q_{k},Q_{k-1}\right]$ as the conditional expectation of $\mathcal{D}_k\left[Q_{k},Q_{k-1}\right]$ given $\mathcal{F}_{k-1}$, we obtain by the definition of $\mathcal{T}$ that
\begin{align}
\mathcal{D}\left[Q_{k},Q_{k-1}\right]
&:= \mathbb{E}_{P}\left(\mathcal{D}_{k}\left[Q_{k},Q_{k-1}\right]|\mathcal{F}_{k-1}\right)\nonumber \\
&= (1+b_k)\mathcal{T}Q_{k} -b_k\mathcal{T}Q_{k-1}.\label{eq:exactD}\nonumber 
\end{align}

Now define the error between $\mathcal{D}_k$ and $\mathcal{D}$
\begin{equation}
    \epsilon_{k}:=\mathcal{D}\left[Q_{k},Q_{k-1}\right]-\mathcal{D}_{k}\left[Q_{k},Q_{k-1}\right].
\end{equation}
Clearly $\mathbb{E}_{P}\left(\epsilon_{k}|\mathcal{F}_{k-1}\right)=0$.
This shows that $\forall(x,u)\in\mathcal{X}\times U(x)$, the sequence
of estimation error $\left\{ \epsilon_{k}(x,u)\right\} _{k=0}^T$
is a martingale difference sequence with respect to the filtration $\mathcal{F}_{k}$.
In other words, if we denote 
\begin{equation}\label{eq:Ek}
    E_{k}(x,u):=\sum_{j=0}^{k}\epsilon_{j}(x,u),
\end{equation}
then $E_k$ is a martingale with respect to $\mathcal{F}_{k},$
$\forall(x,u)\in\mathcal{X}\times U(x)$ and $\forall k\geq0$. 

To proceed, we need the following assumption. 
\begin{assumption}\label{asp:boundQ}
    The Q-function is uniformly bounded throughout the learning process. That is, $\exists V_{\max}$, such that $\norm{Q_k}\leq V_{\max},\forall k\geq0$. Without loss of generality, we further let $R_{\max} +\gamma V_{\max} = V_{\max}$.
\end{assumption}

Based on Assumption~\ref{asp:boundQ}, we can further obtain the uniform bounds of $\mathcal{D}_k$ and $\epsilon_k$ as shown in the following lemma. All proofs are collected in the Appendix.
\begin{lemma}\label{lem:boundDk}
Given Assumption~\ref{asp:boundQ} and AQL as Algorithm~\ref{alg:AQL}, $\mathcal{D}_{k}\left[Q_{k},Q_{k-1}\right]$ and $\epsilon_{k}$
are uniformly bounded for all $k\geq0$. Specifically, $\exists D_{\max}>0$, s.t. $\norm{\mathcal{D}_{k}[Q_{k},Q_{k-1}]}\leq D_{\max}, \norm{\epsilon_{k}}\leq 2D_{\max}, \forall k\geq 0$.
\end{lemma}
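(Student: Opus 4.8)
The plan is to bound $\norm{\mathcal{D}_k[Q_k,Q_{k-1}]}$ by first collapsing the update~\eqref{eq:compactAQL} into a one-step recursion, then using that recursion to prove an $\mathcal{O}(1/k)$ decay of the increments $\norm{Q_k-Q_{k-1}}$, and finally passing to $\epsilon_k$ through a conditional-expectation argument.

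First I would record the elementary pointwise estimates implied by Assumption~\ref{asp:boundQ} together with the normalization $R_{\max}+\gamma V_{\max}=V_{\max}$: for every realization of the sample $y_k$ one has $\norm{\mathcal{T}_k Q_k}\le V_{\max}$, $\norm{\mathcal{T}_k Q_{k-1}}\le V_{\max}$, and, since $\mathcal{T}_k Q_k-\mathcal{T}_k Q_{k-1}=\gamma(\mathcal{M}Q_k-\mathcal{M}Q_{k-1})$ and the pointwise $\max$ is $1$-Lipschitz in the supremum norm, $\norm{\mathcal{T}_k Q_k-\mathcal{T}_k Q_{k-1}}\le\gamma\norm{Q_k-Q_{k-1}}$. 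Writing $\mathcal{D}_k[Q_k,Q_{k-1}]=\mathcal{T}_k Q_k+b_k(\mathcal{T}_k Q_k-\mathcal{T}_k Q_{k-1})$ then gives the key inequality $\norm{\mathcal{D}_k[Q_k,Q_{k-1}]}\le V_{\max}+\gamma\,|b_k|\,\norm{Q_k-Q_{k-1}}$. Because $b_k=k-m-1$ grows linearly in $k$, the crude bound $\norm{Q_k-Q_{k-1}}\le 2V_{\max}$ from Assumption~\ref{asp:boundQ} is not enough; what is really needed is $\norm{Q_k-Q_{k-1}}=\mathcal{O}(1/k)$, and obtaining this decay is the heart of the argument.

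To get it, I would first verify the algebraic identity $b_k(1-a_k)+c_k=a_k$ (a one-line computation from the step-size choices $a_k=\tfrac{1}{k+1}$, $b_k=k-m-1$, $c_k=\tfrac{-k^2+(m+1)k+1}{k+1}$ in Algorithm~\ref{alg:AQL}), which reduces~\eqref{eq:compactAQL} to $Q_{k+1}-Q_k=a_k(\mathcal{D}_k[Q_k,Q_{k-1}]-Q_{k-1})$. With $\delta_k:=\norm{Q_k-Q_{k-1}}$ (and $\delta_0=0$ since $Q_{-1}=Q_0$), combining this with the key inequality gives the scalar recursion $\delta_{k+1}\le\tfrac{1}{k+1}(2V_{\max}+\gamma|b_k|\delta_k)$. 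I would then prove by induction that $\gamma|b_k|\delta_k\le G:=\tfrac{2\gamma V_{\max}}{1-\gamma}$ for every integer $k\ge m$: the base case $k=m$ holds because $|b_m|=1$ and $\delta_m\le 2V_{\max}\le G/\gamma$; for the inductive step one uses $|b_{k+1}|=k-m\le k$, so that $\tfrac{\gamma|b_{k+1}|}{k+1}<\gamma$ and hence $\gamma|b_{k+1}|\delta_{k+1}\le\gamma(2V_{\max}+\gamma|b_k|\delta_k)\le\gamma(2V_{\max}+G)=G$. For the finitely many indices $0\le k<m$ the trivial bounds $\delta_k\le2V_{\max}$ and $|b_k|\le m+1$ give $\gamma|b_k|\delta_k\le 2\gamma(m+1)V_{\max}$. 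Thus $\gamma|b_k|\delta_k\le D':=\max\{G,\,2\gamma(m+1)V_{\max}\}$ for all $k\ge0$, so $\norm{\mathcal{D}_k[Q_k,Q_{k-1}]}\le V_{\max}+D'=:D_{\max}$.

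Finally, since $\mathcal{D}[Q_k,Q_{k-1}]=\mathbb{E}_P(\mathcal{D}_k[Q_k,Q_{k-1}]|\mathcal{F}_{k-1})$, conditional Jensen yields $\norm{\mathcal{D}[Q_k,Q_{k-1}]}\le D_{\max}$, and therefore $\norm{\epsilon_k}=\norm{\mathcal{D}[Q_k,Q_{k-1}]-\mathcal{D}_k[Q_k,Q_{k-1}]}\le 2D_{\max}$, which is the claim. The main obstacle is the middle step: the unbounded momentum weight $b_k$ rules out any termwise bound on $\mathcal{D}_k$, so one must first establish the $\mathcal{O}(1/k)$ decay of the iterate increments, and the induction only closes because $\gamma<1$ contracts the recursion once $k\ge m$, the initial block $k<m$ being absorbed into the constant.
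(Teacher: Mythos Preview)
Your argument is correct, and it is organized a bit differently from the paper's proof. Both rely on the same two ingredients: the algebraic collapse $b_k(1-a_k)+c_k=a_k$, which turns~\eqref{eq:compactAQL} into $Q_{k+1}-Q_k=a_k(\mathcal{D}_k[Q_k,Q_{k-1}]-Q_{k-1})$, and the $1$-Lipschitz property of the pointwise maximum. The paper uses these to build a recursion directly on $\|\mathcal{D}_k\|$, with contraction factor $\gamma|1+b_k|a_{k-1}=\gamma|k-m|/k$, and splits the analysis at $k=\lfloor m/2\rfloor$; in the early regime this factor can be as large as $\gamma m\ge 1$, which is why the paper's constant $D_{\max}$ carries a term of order $(\gamma m)^{\lfloor m/2\rfloor}$. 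You instead isolate the increment $\delta_k=\|Q_k-Q_{k-1}\|$, track the product $\gamma|b_k|\delta_k$, and split at $k=m$; your induction closes cleanly because $\gamma|b_{k+1}|/(k+1)<\gamma$ for $k\ge m$, and the finitely many early steps are absorbed via the crude bound $\delta_k\le 2V_{\max}$. The upshot is that your $D_{\max}=V_{\max}+\max\{2\gamma V_{\max}/(1-\gamma),\,2\gamma(m+1)V_{\max}\}$ is only polynomial in $m$ rather than exponential, and your argument does not actually need the hypothesis $\gamma m\ge 1$. So the two proofs are structurally parallel, but your bookkeeping yields a sharper and simpler constant.
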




The uniform bounds proved in Lemma~\ref{lem:boundDk} are critical in the derivation of the main theorem below.
\begin{theorem}\label{thm:main}
    Given Assumption~\ref{asp:boundSpace},~\ref{asp:boundQ} and fixing $\gamma, m$ with $\gamma m\geq 1$ in AQL as Algorithm~\ref{alg:AQL}, with probability at least $1-\delta$, the output of AQL satisfies:
    \begin{equation}\label{eq:thm}
    \begin{aligned}
        \left\Vert Q^{\star}\!-\!Q_{T}\right\Vert\!\leq\!\frac{2(\gamma R_{\max}\!+\!hV_{\max}) \!+\!D_{\max}\sqrt{8(T-m)\log \frac{2n}{\delta}}}{T(1\!-\!\gamma)},
        \end{aligned}
    \end{equation}
    where $D_{\max}$ is from Lemma~\ref{lem:boundDk} and $h=\gamma(m+1)+1$.
\end{theorem}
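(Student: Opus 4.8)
The plan is to unroll the compact recursion~\eqref{eq:compactAQL} into a closed-form expression for $Q_T$, separate the "exact" part (driven by the true Bellman operator $\mathcal{T}$) from the stochastic error part (driven by the martingale difference $\epsilon_k$), and then bound each part. First, I would substitute the explicit step sizes $a_k=\tfrac{1}{k+1}$, $b_k=k-m-1$, $c_k=\tfrac{-k^2+(m+1)k+1}{k+1}$ into~\eqref{eq:compactAQL}. A direct calculation shows $b_k(1-a_k)+c_k = 0$ with this particular choice, so the $(Q_k-Q_{k-1})$ momentum term \emph{vanishes} after this substitution (this is presumably the whole point of the seemingly strange choice of $c_k$), and $1-a_k = \tfrac{k}{k+1}$, $a_k(1+b_k) = \tfrac{k-m}{k+1}$, $a_k b_k = \tfrac{k-m-1}{k+1}$. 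Thus the recursion collapses to something of the form $(k+1)Q_{k+1} = k\,Q_k + \big[\,(k-m)\mathcal{T}_kQ_k - (k-m-1)\mathcal{T}_kQ_{k-1}\,\big]$, i.e.\ $(k+1)Q_{k+1} = k\,Q_k + \mathcal{D}_k[Q_k,Q_{k-1}]$ in the notation of~\eqref{eq:Dk}. Telescoping from $k=0$ to $T-1$ gives $T\,Q_T = \sum_{k=0}^{T-1}\mathcal{D}_k[Q_k,Q_{k-1}]$.

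Next I would write $\mathcal{D}_k = \mathcal{D} - \epsilon_k = \big[(1+b_k)\mathcal{T}Q_k - b_k\mathcal{T}Q_{k-1}\big] - \epsilon_k$ and split the telescoped sum accordingly, so that $T Q_T = \sum_{k} \big[(1+b_k)\mathcal{T}Q_k - b_k\mathcal{T}Q_{k-1}\big] - E_{T-1}$, where $E_{T-1}=\sum_{k=0}^{T-1}\epsilon_k$ is the martingale from~\eqref{eq:Ek}. For the deterministic part, I would compare against the analogous identity with $Q^\star$ in place of every $Q_k$: since $\mathcal{T}Q^\star = Q^\star$ and $(1+b_k)-b_k = 1$, one gets $T Q^\star = \sum_k\big[(1+b_k)\mathcal{T}Q^\star - b_k\mathcal{T}Q^\star\big]$. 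Subtracting and using the contraction~\eqref{eq:Contraction} of $\mathcal{T}$ in the sup-norm on each term $\norm{\mathcal{T}Q_k-\mathcal{T}Q^\star}\le\gamma\norm{Q_k-Q^\star}$, together with $|1+b_k|$ and $|b_k|$ being of order $k$, should yield a recursive inequality for $\Delta_T:=\norm{Q^\star - Q_T}$ of roughly the form $T\Delta_T \le \gamma\sum_{k=0}^{T-1}(2k-2m)\Delta_k + (\text{small-index boundary terms}) + \norm{E_{T-1}}$; the boundary terms for small $k$ (where $k-m$ or $k-m-1$ is negative) are where the $2(\gamma R_{\max}+hV_{\max})$ piece with $h=\gamma(m+1)+1$ comes from, via Assumption~\ref{asp:boundQ} and $R_{\max}+\gamma V_{\max}=V_{\max}$. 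I would then solve this recursion by induction on $T$ (this is the standard SpeedyQ-style argument): assuming $\Delta_k$ obeys the claimed bound for all $k<T$, plug in and verify it propagates, where the condition $\gamma m\ge 1$ is exactly what makes the coefficient bookkeeping close — the factor $\gamma(2k-2m)$ summed against $1/k$-type weights needs $\gamma m\ge 1$ to be dominated.

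For the stochastic part, I would invoke the Azuma–Hoeffding inequality for the martingale $E_{k}(x,u)=\sum_{j=0}^{k}\epsilon_j(x,u)$. By Lemma~\ref{lem:boundDk} each increment satisfies $|\epsilon_j(x,u)|\le 2D_{\max}$, but one has to be a little careful: only the terms with $j\ge m$ (or so) contribute with full weight after the telescoping rearrangement — this is why the bound features $T-m$ rather than $T$ inside the logarithm. So Azuma gives, for a single $(x,u)$, $\Pr\big[|E_{T-1}(x,u)|\ge \lambda\big]\le 2\exp\!\big(-\lambda^2/(8(T-m)D_{\max}^2)\big)$; taking a union bound over all $n=|\mathcal{X}||\mathcal{U}|$ state-action pairs and setting the right-hand side to $\delta$ yields $\norm{E_{T-1}}\le D_{\max}\sqrt{8(T-m)\log(2n/\delta)}$ with probability at least $1-\delta$. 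Substituting this into the recursion bound from the previous step, dividing through by $T(1-\gamma)$, and simplifying gives~\eqref{eq:thm}. The main obstacle I anticipate is the deterministic recursion: getting the induction to close requires tracking the $O(k)$ coefficients $1+b_k=k-m$ and $b_k=k-m-1$ against the $1/T$ normalization precisely, handling the sign changes of these coefficients for $k<m+1$ as explicit boundary contributions (this generates the $hV_{\max}$ term), and checking that the telescoping of $\sum_k (1+b_k)\mathcal{T}Q_k - b_k\mathcal{T}Q_{k-1}$ against the corresponding $Q^\star$ sum really does leave a clean contraction-controlled remainder — this cancellation, plus the role of $\gamma m\ge1$, is the delicate part.
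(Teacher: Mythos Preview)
Your proposal has two concrete gaps, one computational and one structural.

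First, the claim $b_k(1-a_k)+c_k=0$ is a miscomputation: with $a_k=\tfrac{1}{k+1}$, $b_k=k-m-1$, $c_k=\tfrac{-k^2+(m+1)k+1}{k+1}$ one gets
\[
b_k(1-a_k)+c_k=\frac{k(k-m-1)+(-k^2+(m+1)k+1)}{k+1}=\frac{1}{k+1}=a_k,
\]
so the $(Q_k-Q_{k-1})$ term does \emph{not} vanish, and the recursion is $(k+1)Q_{k+1}=(k+1)Q_k-Q_{k-1}+\mathcal{D}_k[Q_k,Q_{k-1}]$, not $(k+1)Q_{k+1}=kQ_k+\mathcal{D}_k$. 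Telescoping then gives $T Q_T = Q_{T-1}-Q_0+\sum_{k=0}^{T-1}\mathcal{D}_k$, with an extra $Q_{T-1}-Q_0$ that you are missing.

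Second, and more seriously, your plan to bound the deterministic part by applying the contraction~\eqref{eq:Contraction} \emph{termwise} and then closing an inequality of the form $T\Delta_T\le \gamma\sum_k(2k-2m)\Delta_k+\cdots$ cannot work: with $\Delta_k\sim C/k$ the right-hand side is $\sim 2\gamma C T$, so after dividing by $T$ you only get $\Delta_T\lesssim 2\gamma C$, which does not reproduce the $1/T$ rate (and fails entirely once $\gamma>1/2$). The point you are missing is that $\sum_{k}\big[(k-m)\mathcal{T}Q_k-(k-m-1)\mathcal{T}Q_{k-1}\big]$ itself telescopes, since the coefficients are consecutive integers; this collapses the whole sum to $(T-1-m)\mathcal{T}Q_{T-1}+(m+1)\mathcal{T}Q_0$ \emph{before} you take norms. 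That (combined with the $Q_{T-1}-Q_0$ correction above) yields the one-step identity the paper establishes as Lemma~\ref{lem:dynQ}, and from there a clean recursion $\|Q^\star-Q_k\|\le \tfrac{\gamma(k-m-1)+1}{k}\|Q^\star-Q_{k-1}\|+\tfrac{h}{k}\|Q^\star-Q_0\|+\tfrac{\|E_{k-1}\|}{k}$; the condition $\gamma m\ge 1$ is exactly what turns $\gamma(k-m-1)+1$ into $\le\gamma(k-1)$, after which iteration gives Lemma~\ref{lem:errorProp}. Note also that iterating this recursion produces $\tfrac{1}{T}\sum_i\gamma^i\|E_{T-i}\|$, i.e.\ \emph{all} the partial-sum martingales appear, so you need the \emph{maximal} Hoeffding--Azuma inequality (Lemma~\ref{lem:ineq}) over $\max_{m+1\le j\le T}\|E_j\|$, not just a single Azuma bound on $\|E_{T-1}\|$ as you propose; this is also the actual source of the $T-m$ in~\eqref{eq:thm}.
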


Combining Theorem~\ref{thm:main} and the Borel-Cantelli lemma we know $Q_T$ converges to $Q^\star$ with the rate $\mathcal{O}(\sqrt{1/T})$ almost surely. 
We further comment that the asymptotic convergence rate is comparable to that of SpeedyQ~\cite{azar2011speedy}.
Recalling our intuition for the difference of the update rules, we expect AQL can actually outperform SpeedyQ. We numerically verify this in the following.

\subsection{Numeric Comparison with SpeedyQ}
Given the analytical convergence rate derived above is also comparable to SpeedyQ. We seek to explore extra experiments to verify that our algorithm actually outperforms SpeedyQ. We emphasize that the settings in this section are consistent with those of AQL in algorithm~\ref{alg:AQL} and SpeedyQ in~\cite[Algorithm1]{azar2011speedy}. Thus the numerical results should be able to give a convincing comparison between two algorithms. Note that the choice of $m$ is not explicitly specified with only a lower bound. We try 3 different selections of $m$ and observe stable performance in convergence, which also aligns with the theoretical analysis. To evaluate the algorithms in the finite state-action space, we apply them to the popular FrozenLake grid world games. 

FrozenLake is a classic baseline problem for Q-learning. An agent controls the movement of a character in a grid world. Some tiles of the grid are walkable, and others lead to the agent falling into the water. Additionally, the movement direction of the agent is uncertain and only partially depends on the chosen direction. The agent is rewarded for finding a feasible path to a goal tile. The  environment for FrozenLake is a $4 \times 4$ grid world. We consider two sub-tasks, the FrozenLake (Fig.~\ref{fig: frozenlake comparison}) and the FrozenLake8x8 (Fig.~\ref{fig: frozenlake8 comparison}) with a bigger grid world . In both Frozenlake tasks, "S" is the safe starting point, "F" is the safe frozen surface, "H" stands for the hole that terminates the game, and "G" is the target state that comes with an immediate reward of 1. This forms a problem with state space size of $16$ ($64$ for FrozenLake8x8), action space size of $4$ and reward space $R = \{0, 1\}$.

\begin{figure}[b]
\begin{subfigure}{.22\textwidth}
  \centering
  \includegraphics[trim={4cm 0 4cm 2cm},clip,scale=0.07]{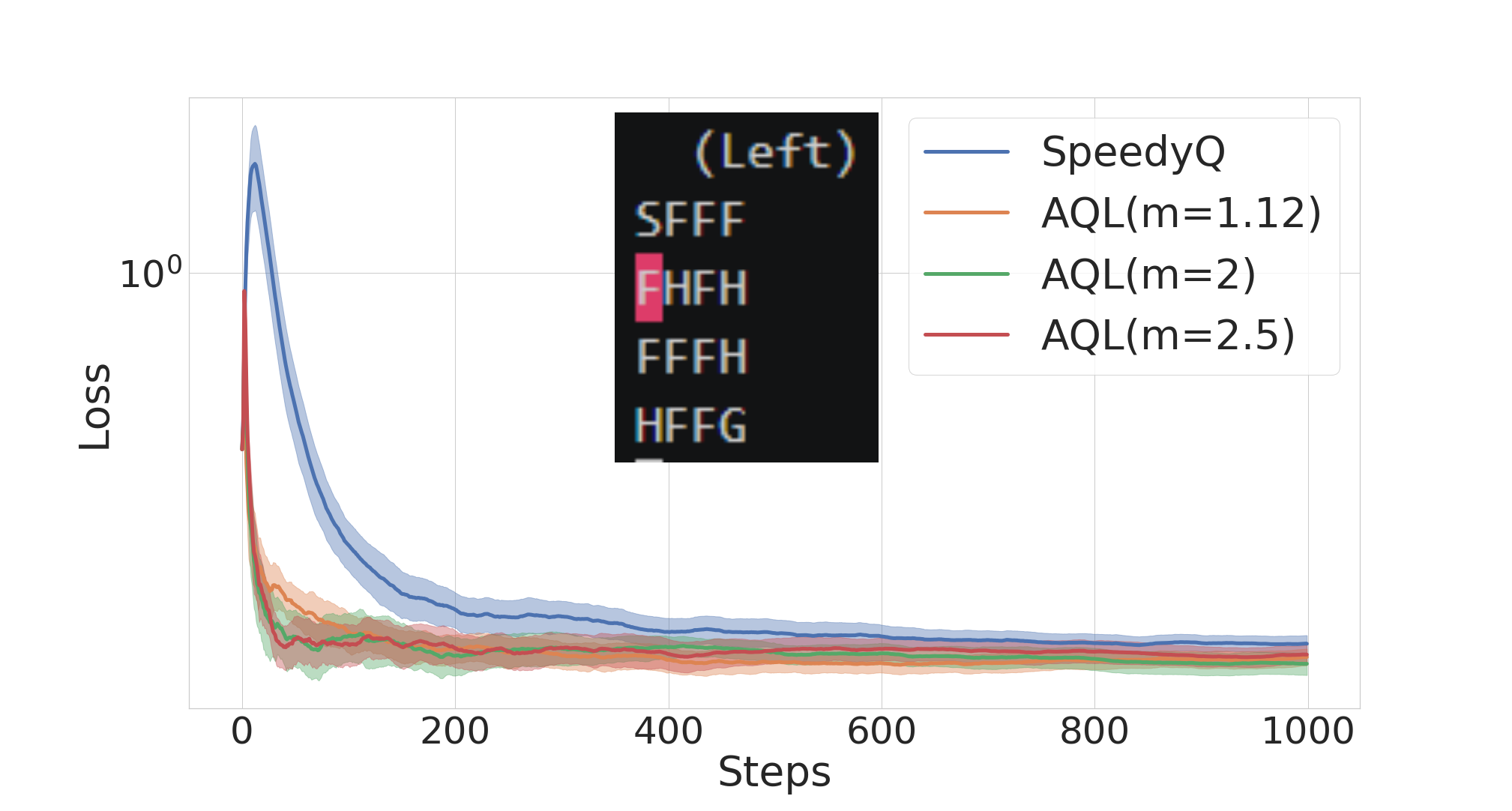}
  \caption{FrozenLake}
  \label{fig: frozenlake comparison}
\end{subfigure}
\begin{subfigure}{.22\textwidth}
  \centering
  \includegraphics[trim={4cm 0 4cm 2cm},clip,scale=0.07]{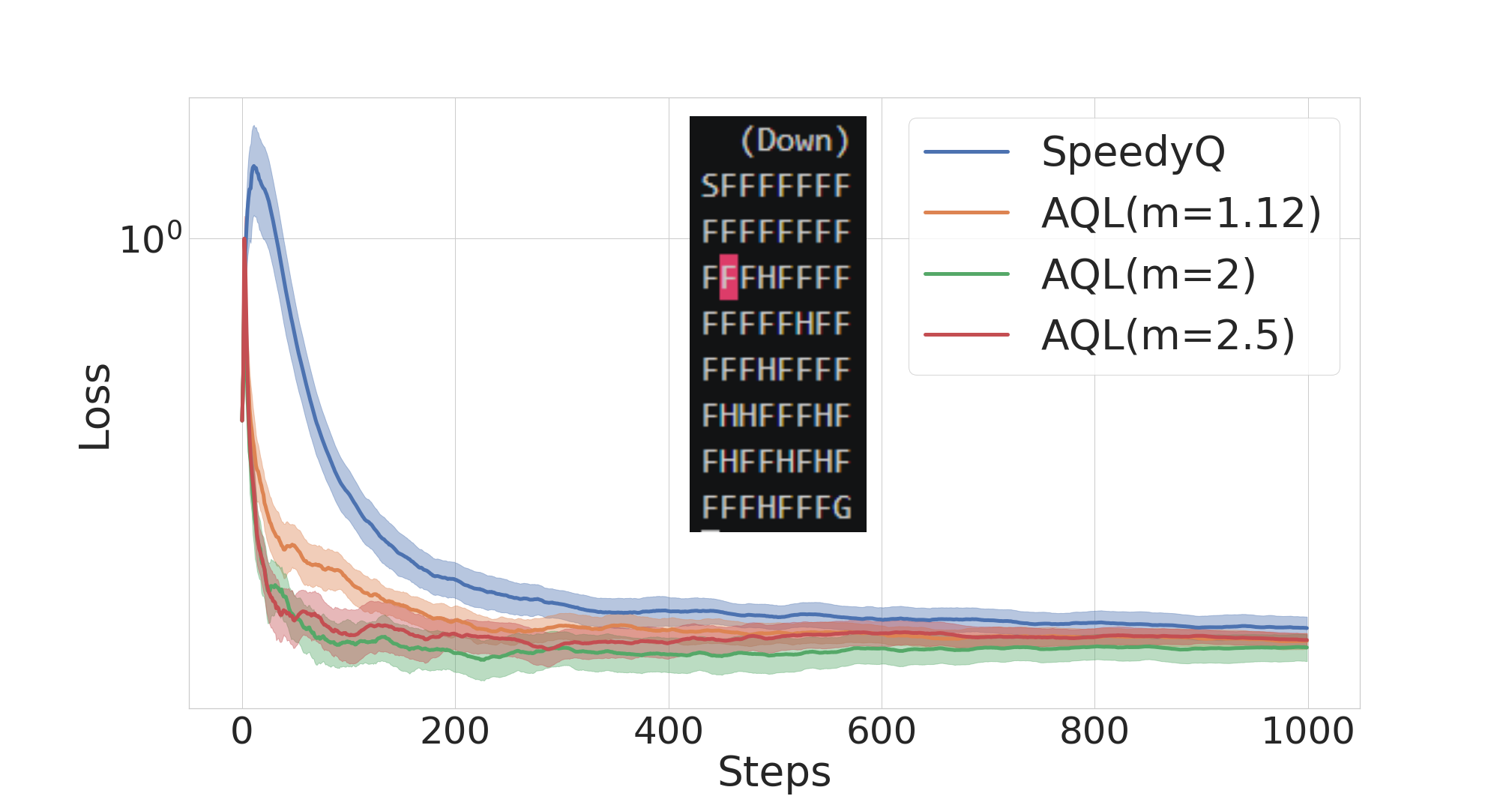}
  \caption{FrozenLake8x8}
  \label{fig: frozenlake8 comparison}
\end{subfigure}
\caption{Comparing AQL with SpeedQ.}
\label{fig:aql comp}
\vspace{-0.1in}
\end{figure}


Considering the randomness embedded in the MDP of both FrozenLake games, we evaluate the performance of each algorithm with 20 different random seeds and then illustrate the average loss and standard deviation in Fig.~\ref{fig: frozenlake comparison} and Fig.~\ref{fig: frozenlake8 comparison}. For evaluation purpose, we have access to the true transition probability, and can find the ground truth optimal Q-function $Q^{\star}$ using dynamic programming. In both games, the loss at step $k$ is then defined as $\norm{Q_k - Q^\star}$. It can be seen from the results that AQL with various choices of $m$ all can converge faster than Speedy Q-learning.

\section{GENERALIZATION TO PARAMETRIC AQL}\label{sec:ctnQ}

In this section, we generalize our acceleration scheme to the case where the state-action space is considerably large or even continuous. Numerical verification of the performance of the proposed algorithms is then provided with various tasks.

\subsection{AQL with Q-function approximation}\label{subsec:PAQL}

We consider the same MDP problem as that in Section~\ref{sec: tabularQ}, but in a continuous state-action space $\mathcal{X}\times\mathcal{U}$. In this case, it is often impossible or extremely difficult to write the Q-function as an explicit tabular function w.r.t each state-action pair, and thus the update rule of (\ref{eq:tabular q learning}) is no longer applicable.

To handle this problem, we consider a parametric function $\hat{Q}(x,u ; \theta)$ as an approximation of the Q-function. The parameter vector $\theta$ is of finite and relatively lower dimension and thus easier to implement. The approximation architectures can be rich through different choices of the function class, such as linear function approximation~\cite{bertsekas1996neuro} and neural networks~\cite{mnih2015human}. Instead of updating the estimating Q-function directly as~\eqref{eq:tabular q learning}, here we can only iteratively update the parameter $\theta$. This kind of Q-learning is referred as parametric Q-learning (PQL), which follows the update rule as
\begin{equation}\label{eq:vpql}
    \theta_{k+1}  = \theta_{k} - \alpha_{k} \Delta_k\frac{\partial}{\partial \theta_{k}}\hat{Q}_{k}(x, u; \theta_{k}),
\end{equation}
where 
\begin{equation}\label{eq:Delta}
    \Delta_k\! =\! \hat{Q}_{k}(x, u; \theta_{k})\! -\! R(x, u)\! -\! \gamma \underset{u'\in U(x')}{\max} \hat{Q}_k(x', u';\theta_{k}).
\end{equation}
Then we can generalize the proposed acceleration scheme to the vanilla PQL in~\eqref{eq:vpql}. We refer the AQL in this case as parametric AQL (PAQL) given by
\begin{equation}
    \begin{aligned}
    \label{eq: acc pq learning}
    & \xi_k = \theta_{k-1} - a_{k} \Delta_{k-1}\frac{\partial}{\partial \theta_{k-1}}\hat{Q}_{k-1}(x, u; \theta_{k-1}),\\
    & \zeta_{k} = \theta_{k} - a_{k} \Delta_k\frac{\partial}{\partial \theta_{k}}\hat{Q}_{k}(x, u; \theta_{k}), \\
    & \theta_{k+1} = \zeta_{k} + b_k(\zeta_{k} - \xi_{k}) + c_k(\theta_{k}-\theta_{k-1}).
    \end{aligned}
\end{equation}
Notice that when we take $b_k=0$, then the update~\eqref{eq: acc tabular q learning} only involves one-step historical information $\theta_{k-1}$. This applies the same idea as HB and thus is referred as HBPAQL. When taking $b_k\neq 0$, we also involve $\xi_k$ into the update, which is motivated by the idea of NAG and thus denoted as NesPAQL. 

In the following, we evaluate PAQL in two discrete-time LQR problems from the Deepmind Control Suite~\cite{tassa2018deepmind} and Atari 2600 games, where the function approximation architectures are linear functions and neural networks, respectively. 
Throughout this section, we adopt $\epsilon$-greedy~\cite{mnih2015human} and prioritized experience replay~\cite{schaul2015prioritized} for exploration and sampling, both of which are common techniques in RL with parametric approximation.

\vspace{0.1in}
\begin{figure}[b]
  \centering
  \includegraphics[scale=0.36]{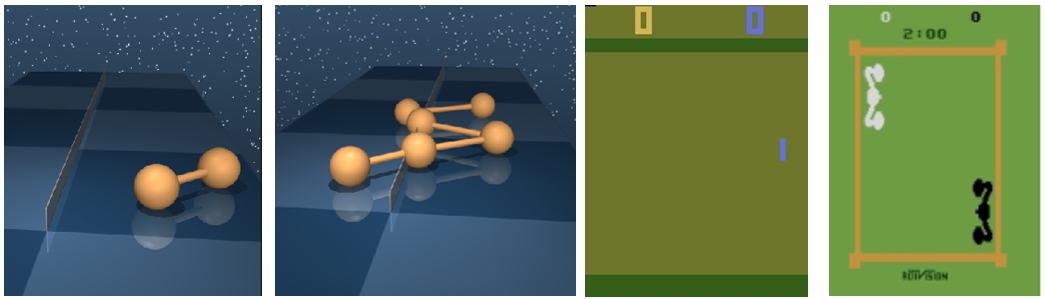}
  \caption{Testing tasks for PAQL, from left to right: LQR with 2 Masses and 1 Linear Actuators, LQR with 6 Masses and 2 Linear Actuators, Pong and Boxing from Atari 2600 Games.}
  \label{fig: env}
\end{figure}

\subsection{Linear Quadratic Regulation}\label{subsec: expLin}
The problem of infinite-horizon discrete-time LQR considers a linear system
\begin{align}
    \label{eq: linear system}
    x_{k+1} = Ax_k + Bu_k,
\end{align}
with cost function in a quadratic form as
\begin{align}
    \label{eq: lqr cost}
    J = \sum_{t=0}^{\infty}\left( x_k^T Q x_k + u_k^T R u_k + 2x_k^TNu_k\right).
\end{align}
 Let the positive definite $P$ be the unique solution to the discrete-time algebraic Riccati equation (DARE) 
\begin{equation}
    \label{eq: DARE}
    P = A^TPA\! -\! (A^TPB\! +\! N)(R\! +\! B^TPB)^{-1}(B^TPA\! +\! N^T)\! + \!Q.
\end{equation}
\normalsize
We have the optimal control as $u_k^{\star} = -K^{\star}x_k$ with
\begin{align}
    \label{eq: optimal linear solution}
    K^{\star} = (R + B^TPB)^{-1} (N^T + B^TPA).
\end{align}
Following the procedure established in section~\ref{subsec:PAQL}, we parameterize a Q-function which is linear w.r.t a matrix parameter $H$ in the form of
\begin{align}
    \label{eq: Q-function for lqr}
    Q(x, u ;H) = 
    \left[ \begin{array}{c} x \\u\end{array} \right]^T \left[ \begin{array}{c c} H_{xx} &H_{xu} \\ H_{ux} &H_{uu}\end{array} \right] \left[ \begin{array}{c} x \\u\end{array} \right].
\end{align}
The stationary linear policy corresponding to the Q-function (\ref{eq: Q-function for lqr}) satisfies $u = -Kx, K = H_{uu}^{-1}H_{ux}$. We evaluate the performance of various PAQL algorithms at each iterate $k$ with the Euclidean norm $\norm{K_k - K^{\star}}_2$.

In this section, the linear system is constructed as a coupled mass damper system with $n$ masses, serially connected through linear joints (see Fig.~\ref{fig: env}) with $m$ joints being actuated. The system has the state dimension of $2n$ with position state $x_p$ and velocity state $x_v$. The action dimension is $m$. The reward is quadratic with respect to the position and controls, i.e. $R = \frac{1}{2}x_p^TQx_p + \eta \frac{1}{2}u^TRu$ with control cost coefficient $\eta=0.1$. The system is a default RL benchmark from the Deepmind control suite~\cite{tassa2018deepmind}. We consider two sub-tasks, the "LQR\_2\_1" with $n=2$, $m=1$ and the "LQR\_6\_2" with $n=6$, $m=2$ which take 4269 and 11840 iterates respectively to converge to $K^{\star}$ through DARE.

We compare the performance of proposed the PAQL algorithms with the vanilla Q-learning in Fig~\ref{fig: lqr_2_1 comparison} and Fig.~\ref{fig: lqr_6_2 comparison}. For both tasks, we let $a_k=0.9$, $b_k=0.2$, $c_k=0.2,\forall k$ for corresponding algorithms. The learning process of DARE is also included. Direct comparison regarding the training time with DARE is not fair given that DARE requires system dynamics but Q-learning methods are model-free. In our illustration, we exclude the sampling time and consider the number of value iterations required to achieve certain level of desired performance (Table~\ref{tbl: lqr convergence}).



\begin{table}[b]
\caption{Iterates for Converging to $\norm{K_k - K^{\star}}_2 \leq 0.1$}
\label{tbl: lqr convergence}
\begin{center}
\begin{tabular}{|c||c|c|c|c|}
\hline
Task & DARE & Q-learning & HBPAQL & NesPAQL \\
\hline
LQR\_2\_1 & 769 & 515 & 229 & 205 \\
LQR\_6\_2 & 2768 & 1094 & 235 & 241 \\
\hline
\end{tabular}
\end{center}
\end{table}

\subsection{Atari 2600 games}\label{subsec: expNN}

We further evaluate the performance of PAQL with two Atari 2600 games. It is a challenging RL benchmark task that takes high-dimensional high-frequency video sequence ($dim(\mathcal{X})=84 \times 84 \times 4$) as state and real video game control keys as action. The performance for each algorithm is justified empirically by the average return of 100 trails of episodes. The Q-function is parameterized as a deep convolutional neural network. 
Hyper-parameters are set as $a_k=0.9$, $b_k=0.2$, $c_k=0.2,\forall k$. The algorithm is implemented based on the open.ai baseline, which is a set of high-quality implementations of RL algorithms. The original DQN implementation and its variants are roughly on par with scores in published papers, which mostly exceeds expert level of human play. Results are illustrated in Fig.~\ref{fig: atari comparison}. 

The Q-function is structured with millions of parameters (i.e., the weights of deep neural networks). The sampling and target learning with SGD both consume a significant amount of time and computational power. On a dual-GPU machine with the PAQL algorithm, the training for the game Pong takes $0.6$ million samples in 15 minutes. For the Boxing game to achieve the illustrated results, PAQL takes $2$ million samples in 40 minutes. On the contrast, DQN would require at least $4$ million samples to acquire similar performance.
\begin{figure}[h]
\begin{subfigure}{.45\textwidth}
  \centering
  \includegraphics[trim={0 0 0 0},clip,scale=0.15]{images/lqr_2_1.png}
  \caption{LQR\_2\_1}
  \label{fig: lqr_2_1 comparison}
\end{subfigure}
\begin{subfigure}{.45\textwidth}
  \centering
  \includegraphics[trim={0 0 0 0},clip,scale=0.15]{images/lqr_6_2.png}
  \caption{LQR\_6\_2}
  \label{fig: lqr_6_2 comparison}
\end{subfigure}
\begin{subfigure}{.45\textwidth}
  \centering
  \includegraphics[trim={0 0 0 0},clip,scale=0.15]{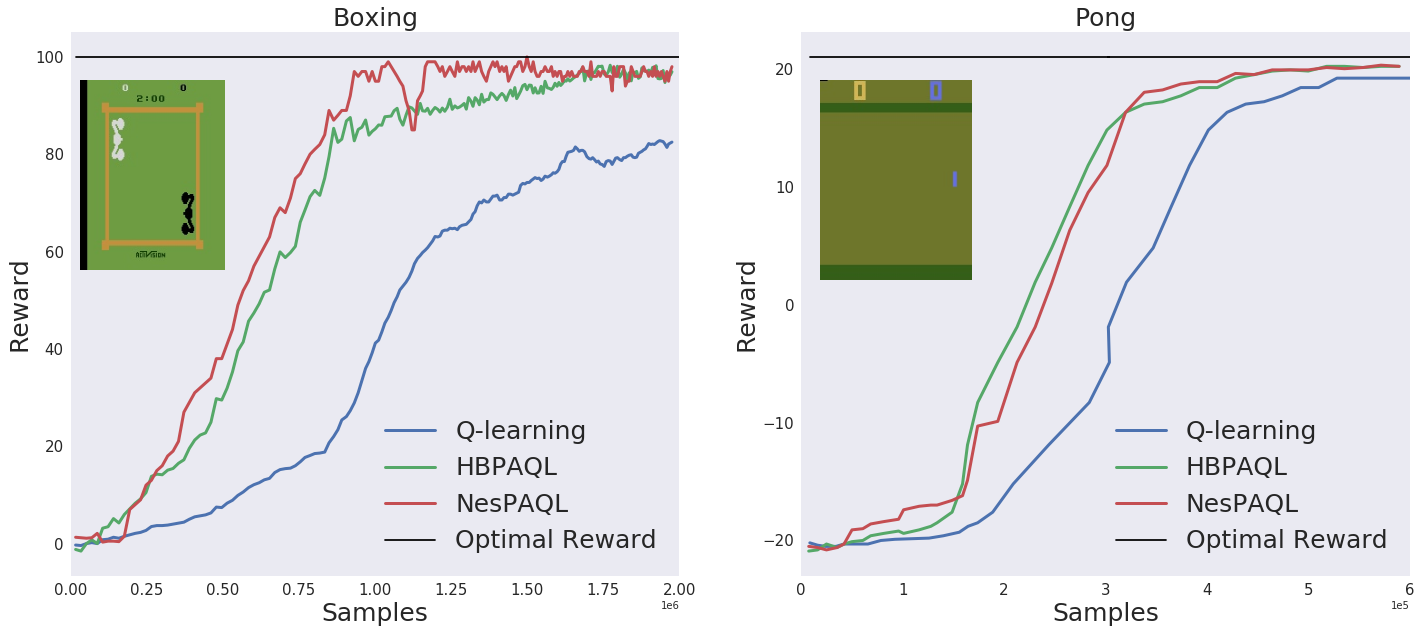}
  \caption{Atari 2600 Games}
  \label{fig: atari comparison}
\end{subfigure}
\caption{Comparing Various Methods for LQR and Atari 2600 Games.}
\label{fig:paql comp}
\vspace{-0.1in}
\end{figure}

\section{CONCLUSION}

We proposed a set of momentum-based accelerated Q-learning algorithms, which are provably converging faster than vanilla Q learning in finite state-action space if it is stable. We empirically evaluate the algorithms and verify that the proposed algorithms can accelerate the convergence in comparison to SpeedyQ and vanilla Q-learning on various challenging tasks under both finite and continuous state-action spaces settings.

Our future work includes providing theoretical guarantee of the stability of the proposed algorithm in the finite state-action space case (see Assumption~\ref{asp:boundQ}, although we never found unstable cases in our simulation). For the continuous state-action space, it is also important to study the stability of the proposed iteration scheme when using a general nonlinear approximation architecture such as neural networks. Moreover, further extensions of this work include exploring more complicated adaptive acceleration schemes to improve the convergence performance, which will be our future interests.






\bibliographystyle{plain}        
\bibliography{CDC19} 

\appendix

\noindent\textbf{Maximal Hoeffding-Azuma Inequality:}
\begin{lemma}\label{lem:ineq}
Let $\{M_1,M_2,,\dots,M_T\}$ be a martingale difference sequence with respect to a sequence of random variables $\{X_1,X_2,,\dots,X_T\}$ (i.e. $\mathbb{E}(M_{k+1}|X_1,X_2,,\dots,X_k)=0,\forall 1\leq k\leq T$) and uniformly bounded by $\bar M>0$. If we define $S_k=\sum_{i=1}^k M_i$, then for any $\varepsilon>0$, we have
$$ \mathbb{P}\left( \underset{1\leq k\leq T}{\max}S_k>\varepsilon \right)\leq \exp\left( \frac{-\varepsilon^2}{2T\bar M^2} \right).
$$
\end{lemma}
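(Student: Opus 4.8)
The plan is to combine a Chernoff-style exponential transform with Doob's maximal inequality for submartingales, and to control the resulting moment generating function through Hoeffding's lemma applied conditionally at each step. First I would fix an arbitrary $\lambda > 0$ and observe that, since $S_k$ is a martingale (with the convention $S_0=0$) and the map $x \mapsto e^{\lambda x}$ is convex and increasing, the process $\{e^{\lambda S_k}\}_{k=1}^T$ is a nonnegative submartingale with respect to the filtration $\mathcal{F}_k := \sigma(X_1,\dots,X_k)$. Applying Doob's maximal inequality to this submartingale at level $e^{\lambda\varepsilon}$, and noting that $\{\max_k S_k > \varepsilon\}$ is contained in $\{\max_k e^{\lambda S_k} \ge e^{\lambda\varepsilon}\}$, yields
$$\mathbb{P}\Big(\max_{1\le k\le T} S_k > \varepsilon\Big) \le e^{-\lambda\varepsilon}\,\mathbb{E}\big[e^{\lambda S_T}\big].$$

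Next I would bound the moment generating function $\mathbb{E}[e^{\lambda S_T}]$ by peeling off the last increment and iterating. Conditioning on $\mathcal{F}_{T-1}$, the martingale difference $M_T$ has zero conditional mean and satisfies $|M_T|\le \bar M$, so it is supported in an interval of length $2\bar M$; Hoeffding's lemma then gives the conditional estimate $\mathbb{E}[e^{\lambda M_T}\mid \mathcal{F}_{T-1}] \le \exp(\lambda^2 \bar M^2/2)$. Writing $e^{\lambda S_T} = e^{\lambda S_{T-1}}\,e^{\lambda M_T}$ and taking the conditional expectation, the $\mathcal{F}_{T-1}$-measurable factor $e^{\lambda S_{T-1}}$ pulls out, so that $\mathbb{E}[e^{\lambda S_T}] \le \exp(\lambda^2\bar M^2/2)\,\mathbb{E}[e^{\lambda S_{T-1}}]$. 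Iterating this recursion down to the first increment (where the unconditional Hoeffding bound applies to $M_1$) gives $\mathbb{E}[e^{\lambda S_T}] \le \exp(T\lambda^2 \bar M^2/2)$.

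Finally I would substitute this into the Doob estimate and optimize over $\lambda$, obtaining
$$\mathbb{P}\Big(\max_{1\le k\le T} S_k > \varepsilon\Big) \le \exp\Big(-\lambda\varepsilon + \tfrac{1}{2}T\lambda^2\bar M^2\Big).$$
The exponent is a convex quadratic in $\lambda$ minimized at $\lambda = \varepsilon/(T\bar M^2)$, at which point it equals $-\varepsilon^2/(2T\bar M^2)$, establishing the stated bound.

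I expect the main obstacle to be the careful verification of the conditional Hoeffding estimate and the submartingale property rather than any deep difficulty: one must check that the boundedness $|M_k|\le \bar M$ together with the zero-conditional-mean hypothesis legitimately licenses Hoeffding's lemma at each step, and that the tower property is applied in the correct order when peeling increments so that the measurability of $e^{\lambda S_{k-1}}$ is used correctly. Once these conditional estimates are in place, the application of Doob's inequality and the scalar optimization over $\lambda$ are routine.
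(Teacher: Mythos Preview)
The paper does not actually prove this lemma: it is stated in the appendix under the heading ``Maximal Hoeffding--Azuma Inequality'' as a known result and then invoked directly in the proof of Theorem~\ref{thm:main}. Your proposal supplies a correct and standard proof: the combination of Doob's submartingale maximal inequality applied to $\{e^{\lambda S_k}\}$, the conditional Hoeffding bound $\mathbb{E}[e^{\lambda M_k}\mid\mathcal{F}_{k-1}]\le e^{\lambda^2\bar M^2/2}$, the tower-property iteration giving $\mathbb{E}[e^{\lambda S_T}]\le e^{T\lambda^2\bar M^2/2}$, and the final optimization $\lambda=\varepsilon/(T\bar M^2)$ is exactly the classical argument. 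There is nothing to compare against here; your route is the expected one and the potential pitfalls you flag (measurability when pulling out $e^{\lambda S_{k-1}}$, applicability of Hoeffding's lemma to a conditionally centered variable in $[-\bar M,\bar M]$) are genuine but routine checks.
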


\vspace{5mm}
\section{Proof of Lemma~\ref{lem:boundDk}}

\noindent\textbf{Proof of Lemma~\ref{lem:boundDk}:}

\begin{proof}
When $k=0$, 
\begin{align*}
\left\Vert \mathcal{D}_{0}\left[Q_{0},Q_{-1}\right]\right\Vert = & \left\Vert \mathcal{T}_{0}Q_{0}\right\Vert \leq\left\Vert R\right\Vert +\gamma\left\Vert \mathcal{M}Q_{0}(y_{0})\right\Vert \\
\leq & R_{\max}+\gamma V_{\max}=V_{\max}:=\bar B_0
\end{align*}

Now, considering $k\geq1$ we have 
\begin{align*}
&\left\Vert \mathcal{D}_{k}\left[Q_{k},Q_{k-1}\right]\right\Vert \\
\leq & \left\Vert R\right\Vert +\gamma\Vert(1+b_k)\mathcal{M}Q_{k}-b_k\mathcal{M}Q_{k-1}\Vert\\
= & R_{\max}+\gamma\Vert(1+b_k)\mathcal{M}\bigl(Q_{k-1}\\
 & -\alpha_{k-1}Q_{k-2}+\alpha_{k-1}\mathcal{D}_{k-1}\left[Q_{k-1},Q_{k-2}\right]\bigr)\\
 & -b_k\mathcal{M}Q_{k-1}\Vert\\
\leq & R_{\max}+\gamma\norm{Q_{k-1}} + \gamma |1+b_k| a_{k-1}\norm{Q_{k-2}}\\
&+\gamma |1+b_k|\alpha_{k-1}\norm{\mathcal{D}_{k-1}\left[Q_{k-1},Q_{k-2}\right]},
\end{align*}
where the first inequality follows from the triangle inequality and the second follows due to the triangle inequality and the definition of the infinity norm. 

By the choice of the hyper-parameters in Algorithm~\ref{alg:AQL}, we know $|1+b_k| a_{k-1}=\frac{|k-m|}{k}$. Then we consider two cases: $1\leq k<\frac{m}{2}$ and $k\geq \frac{m}{2}$. The first case only contains finite steps, and in the second case, we can simply bound $|1+b_k| a_{k-1}=\frac{|k-m|}{k}\leq 1$.

When $1\leq k<\frac{m}{2}$, i.e. when $1<|1+b_k| a_{k-1}=\frac{|k-m|}{k}\leq m$, we have
\begin{align*}
    &\left\Vert \mathcal{D}_{k}\left[Q_{k},Q_{k-1}\right]\right\Vert \\
\leq & R_{\max}+\gamma\norm{Q_{k-1}} + \gamma |1+b_k| a_{k-1}\norm{Q_{k-2}}\\
&+\gamma |1+b_k|\alpha_{k-1}\norm{\mathcal{D}_{k-1}\left[Q_{k-1},Q_{k-2}\right]}\\
= & R_{\max}+\gamma\norm{Q_{k-1}} + \gamma \frac{m-k}{k}\norm{Q_{k-2}}\\
&+\gamma \frac{m-k}{k}\norm{\mathcal{D}_{k-1}\left[Q_{k-1},Q_{k-2}\right]}\\
\leq & R_{\max}+\gamma V_{\max} +\gamma mV_{\max} +\gamma m\norm{\mathcal{D}_{k-1}\left[Q_{k-1},Q_{k-2}\right]}\\
\leq & (1+\gamma m) V_{\max}\sum_{i=0}^{k-1}(\gamma m)^i + (\gamma m)^k\norm{\mathcal{D}_{0}\left[Q_{k-1},Q_{k-2}\right]}\\
\leq & \frac{(1+\gamma m)((\gamma m)^{k-1}-1) V_{\max}}{\gamma m-1} + (\gamma m)^k V_{\max},
\end{align*}
where the third inequality follows from Assumption~\ref{asp:boundQ}. Since $\gamma m\geq 1$, in this case $D_k$ can be uniformly bounded as
$$\begin{aligned} &\left\Vert \mathcal{D}_{k}\left[Q_{k},Q_{k-1}\right]\right\Vert\\
&\leq \left(\frac{(1+\gamma m)((\gamma m)^{\lfloor m/2 \rfloor-1}-1)}{\gamma m-1}+(\gamma m)^{\lfloor m/2 \rfloor} \right)V_{\max}:=\bar B_1,
\end{aligned}
$$
where $\lfloor * \rfloor$ denotes the largest integer that is no larger than $*$.

Then we consider when $k\geq \frac{m}{2}$, i.e. when $|1+b_k| a_{k-1}=\frac{|k-m|}{k}\leq 1$:
\begin{equation}\label{eq:pf}
\begin{aligned}
&\left\Vert \mathcal{D}_{k}\left[Q_{k},Q_{k-1}\right]\right\Vert \\
\leq & R_{\max}+\gamma\norm{Q_{k-1}} + \gamma |1+b_k| a_{k-1}\norm{Q_{k-2}}\\
&+\gamma |1+b_k|\alpha_{k-1}\norm{\mathcal{D}_{k-1}\left[Q_{k-1},Q_{k-2}\right]}\\
\leq & R_{\max}+\gamma\norm{Q_{k-1}} + \gamma \norm{Q_{k-2}},
\end{aligned}
\end{equation}
Notice that the second inequality follows due to $|1+b_k| a_{k-1}=\frac{|k-m|}{k}\leq 1$, which is the main difference from the previous case. Then we can further bound
\begin{equation}\label{eq:pf4}
\begin{aligned}
&\left\Vert \mathcal{D}_{k}\left[Q_{k},Q_{k-1}\right]\right\Vert \\
\leq & R_{\max}+2\gamma V_{\max}+\gamma\left\Vert \mathcal{D}_{k-1}\left[Q_{k-1},Q_{k-2}\right]\right\Vert \\
\leq & R_{\max}\sum_{i=0}^{k-\lfloor m/2 \rfloor}\gamma^{i}+2V_{\max}\sum_{i=1}^{k-\lfloor m/2 \rfloor}\gamma^{i}\\
&+ \gamma^{k-\lfloor m/2 \rfloor}\left\Vert \mathcal{D}_{\lfloor m/2 \rfloor}\left[Q_{0},Q_{-1}\right]\right\Vert \\
\leq & \frac{\left(R_{\max}+2\gamma V_{\max}\right)}{1-\gamma}+ \bar B_1\\
= & \frac{1+\gamma}{1-\gamma} V_{\max}+ \bar B_1 :=\bar B_2,
\end{aligned}
\end{equation}
where the first inequality follows from Assumption~\ref{asp:boundQ}.
Observe that clearly $\bar B_0 < \bar B_1 < \bar B_2$. Then we can uniformly bound $\norm{D_k}$ as 
$$\left\Vert \mathcal{D}_{k}\left[Q_{k},Q_{k-1}\right]\right\Vert\leq \bar B_2:=D_{\max},\ \forall k\geq 0.$$

The bound on $\epsilon_{k}$ follows directly from its definition as
$$\begin{aligned}
\left\Vert \epsilon_{k}\right\Vert &=\left\Vert \mathbb{E}_{P}\left(\mathcal{D}_{k}\left[Q_{k},Q_{k-1}\right](x,u)|\mathcal{F}_{k-1}\right)-\mathcal{D}_{k}\left[Q_{k},Q_{k-1}\right]\right\Vert\\
&\leq2\left\Vert \mathcal{D}_{k}\left[Q_{k},Q_{k-1}\right]\right\Vert\leq 2D_{\max}.
\end{aligned}$$
Thus we conclude our proof.
\end{proof}

\noindent\textbf{Proof of Theorem~\ref{thm:main}:}

We first prove two lemmas that will lead to the main results.
The first lemma derives the dynamics of $Q_k$ in terms of $E_k$, which will be handy later.
\begin{lemma}\label{lem:dynQ}
For any $k\geq1$ and given AQL as in Algorithm~\ref{alg:AQL}, we have 
\begin{equation}\label{eq:2termIter}
\begin{aligned}
Q_{k}=&\frac{1}{k}(Q_{k-1}-Q_{0}+(k-m-1)\mathcal{T}Q_{k-1})\\
&+\frac{1}{k}((m+1)\mathcal{T}Q_{0}-E_{k-1})
\end{aligned}
\end{equation}
\end{lemma}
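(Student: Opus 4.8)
The plan is to prove the identity~\eqref{eq:2termIter} by induction on $k$, after first collapsing the AQL recursion into a clean two-term form. First I would substitute the step sizes $a_k=\frac{1}{k+1}$, $b_k=k-m-1$, $c_k=\frac{-k^2+(m+1)k+1}{k+1}$ from Algorithm~\ref{alg:AQL} into the compact update~\eqref{eq:compactAQL}. A direct computation gives $1-a_k=\frac{k}{k+1}$ and, crucially, shows that the coefficient multiplying $Q_k-Q_{k-1}$ reduces to a single term:
\begin{equation*}
b_k(1-a_k)+c_k=\frac{k(k-m-1)+\bigl(-k^2+(m+1)k+1\bigr)}{k+1}=\frac{1}{k+1}=a_k .
\end{equation*}
Recalling $\mathcal{D}_k[Q_k,Q_{k-1}]=(1+b_k)\mathcal{T}_kQ_k-b_k\mathcal{T}_kQ_{k-1}$ from~\eqref{eq:Dk}, the update~\eqref{eq:compactAQL} then reads
\begin{equation*}
Q_{k+1}=\frac{k}{k+1}Q_k+\frac{1}{k+1}(Q_k-Q_{k-1})+\frac{1}{k+1}\,\mathcal{D}_k[Q_k,Q_{k-1}] .
\end{equation*}

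Next I would pass from the empirical operator to the exact one. By the definition of $\epsilon_k$, $\mathcal{D}_k[Q_k,Q_{k-1}]=\mathcal{D}[Q_k,Q_{k-1}]-\epsilon_k=(1+b_k)\mathcal{T}Q_k-b_k\mathcal{T}Q_{k-1}-\epsilon_k=(k-m)\mathcal{T}Q_k-(k-m-1)\mathcal{T}Q_{k-1}-\epsilon_k$. Multiplying the displayed recursion by $k+1$ yields the workhorse relation
\begin{equation*}
(k+1)Q_{k+1}=(k+1)Q_k-Q_{k-1}+(k-m)\mathcal{T}Q_k-(k-m-1)\mathcal{T}Q_{k-1}-\epsilon_k ,
\end{equation*}
valid for all $k\geq0$.

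Then I would run the induction on $k$. For the base case $k=1$, evaluating the workhorse relation at $k=0$ and using $Q_{-1}=Q_0$ gives $Q_1=-m\,\mathcal{T}Q_0+(m+1)\mathcal{T}Q_0-\epsilon_0=\mathcal{T}Q_0-\epsilon_0$, which coincides with the right-hand side of~\eqref{eq:2termIter} at $k=1$ since $E_0=\epsilon_0$ and $(1-m-1)\mathcal{T}Q_0+(m+1)\mathcal{T}Q_0=\mathcal{T}Q_0$. For the inductive step, assume~\eqref{eq:2termIter} holds at $k$, i.e.\ $kQ_k=Q_{k-1}-Q_0+(k-m-1)\mathcal{T}Q_{k-1}+(m+1)\mathcal{T}Q_0-E_{k-1}$. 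Writing $(k+1)Q_k=kQ_k+Q_k$ in the workhorse relation and substituting the hypothesis, the terms $Q_{k-1}$ and $(k-m-1)\mathcal{T}Q_{k-1}$ cancel, and using $E_k=E_{k-1}+\epsilon_k$ one obtains $(k+1)Q_{k+1}=Q_k-Q_0+(k-m)\mathcal{T}Q_k+(m+1)\mathcal{T}Q_0-E_k$, which is precisely~\eqref{eq:2termIter} at $k+1$. This closes the induction.

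Since the lemma is an exact algebraic identity, there is no analytic obstacle; the only care needed is in the bookkeeping. The crucial step is the collapse $b_k(1-a_k)+c_k=a_k$ — this is exactly where the particular choice of $c_k$ is used, and without it the recursion would not reduce to the two-term form on which the telescoping induction depends. I would also track the index conventions carefully, namely $E_{k-1}=\sum_{j=0}^{k-1}\epsilon_j$ and $Q_{-1}=Q_0$, both of which enter the base case.
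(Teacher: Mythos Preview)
Your proof is correct and follows essentially the same approach as the paper: both substitute the step sizes into~\eqref{eq:compactAQL}, pass from $\mathcal{D}_k$ to $\mathcal{D}$ via $\epsilon_k$, and then run an induction on $k$. Your explicit observation that $b_k(1-a_k)+c_k=a_k$ makes the collapse to the two-term recursion transparent, whereas the paper leaves this implicit in the first line of its inductive computation; otherwise the arguments are the same.
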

\begin{proof}
First we rewrite~\eqref{eq:compactAQL} by the definition of $D_k$ and $\epsilon_k$ as
\begin{equation}\label{eq:oneLineIteration}
    \begin{aligned}
    Q_{k+1}\!=\! & (1\!-\!a_{k})Q_{k}\!+\!\left[b_{k}(1-a_{k})\!+\!c_{k}\right](Q_{k}-Q_{k-1}) \\
    & +a_{k}\left[(1+b_{k})\mathcal{T}_{k}Q_{k}-b_{k}\mathcal{T}_{k}Q_{k-1}\right] \\
    = & (1\!-\!a_{k})Q_{k}\!+\!\left[b_{k}(1-a_{k})\!+\!c_{k}\right](Q_{k}-Q_{k-1}) \\
    & +a_{k}\mathcal{D}_{k}\left[Q_{k},Q_{k-1}\right]\\
    = & (1\!-\!a_{k})Q_{k}\!+\!\left[b_{k}(1-a_{k})\!+\!c_{k}\right](Q_{k}-Q_{k-1}) \\
    & +a_{k}\left(\mathcal{D}\left[Q_{k},Q_{k-1}\right]-\epsilon_k\right).
    \end{aligned}
\end{equation}

Then we prove the lemma by plugging in the choice of the hyper-parameters and using induction. For $k=1$, $Q_{1}=\mathcal{T}_{1}Q_{0}=\mathcal{T}Q_{0}-E_{1}$, Thus (\ref{eq:2termIter}) holds when $k=1$. Now under the assumption that (\ref{eq:2termIter})
holds for $k$ we prove it also holds for $k+1$.
\begin{align*}
&Q_{k+1}\\
= & \frac{1}{k+1}Q_{k}-\frac{1}{k+1}Q_{k-1}+\frac{k}{k+1}Q_{k}\\
 & +\frac{1}{k+1}\left[(k-m)\mathcal{T}_{k}Q_{k}-(k-m-1)\mathcal{T}_{k}Q_{k-1}\right]\\
= & \frac{1}{k+1}Q_{k}-\frac{1}{k+1}Q_{k-1}+\frac{1}{k+1}(Q_{k-1}-Q_{0}\\
 & +(k-m-1)\mathcal{T}Q_{k-1}+(m+1)\mathcal{T}Q_{0}-E_{k-1})\\
 & +\frac{1}{k+1}\left[(k-m)\mathcal{T}_{k}Q_{k}-(k-m-1)\mathcal{T}_{k}Q_{k-1}\right]\\
= & \frac{1}{k+1}Q_{k}-\frac{1}{k+1}Q_{k-1}+\frac{1}{k+1}(Q_{k-1}-Q_{0}\\
 & +(k-m-1)\mathcal{T}Q_{k-1}+(m+1)\mathcal{T}Q_{0}-E_{k-1})\\
 & +\frac{1}{k+1}\left[(k-m)\mathcal{T}Q_{k}-(k-m-1)\mathcal{T}Q_{k-1}-\epsilon_{k}\right]\\
= & \frac{1}{k+1}(Q_{k}-Q_{0}+(k-m)\mathcal{T}Q_{k}+(m+1)\mathcal{T}Q_{0}-E_{k}),
\end{align*}
which shows (\ref{eq:2termIter}) holds for $k+1$, and therefore
it holds for all $k\geq1$.
\end{proof}

The second lemma derives the propagation of the errors $\epsilon_k$ in the process of $Q$ function iteration, which can be proved conveniently using Lemma~\ref{lem:dynQ}.
\begin{lemma}\label{lem:errorProp}
Given Assumption~\ref{asp:boundQ} and fixing $\gamma,m$ with $\gamma m\geq 1$ in AQL as Algorithm~\ref{alg:AQL}, for all $k\geq m+1$, we have
\begin{equation}\label{eq:pf5}
    \left\Vert Q^{\star}\!-\!Q_{k}\right\Vert\leq 2\frac{\gamma R_{\max}+hV_{\max}}{k(1-\gamma)}\!+\!\frac{1}{k}\sum_{i=0}^{k-m-1}\gamma^{i}\Vert E_{k-i}\Vert,
\end{equation}
where $h=\gamma(m+1)+1$.
\end{lemma}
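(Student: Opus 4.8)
\textbf{Proof proposal for Lemma~\ref{lem:errorProp}.}

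The plan is to start from the closed-form expression for $Q_k$ established in Lemma~\ref{lem:dynQ} and compare it to a similar expansion of $Q^\star$. Recall that for $k\geq 1$,
$$
Q_k = \frac{1}{k}\bigl(Q_{k-1}-Q_0+(k-m-1)\mathcal{T}Q_{k-1}+(m+1)\mathcal{T}Q_0-E_{k-1}\bigr).
$$
Since $Q^\star$ is the fixed point of $\mathcal{T}$, we have $Q^\star = \mathcal{T}Q^\star$, and I can write the algebraic identity
$$
Q^\star = \frac{1}{k}\bigl(Q^\star-Q^\star+(k-m-1)\mathcal{T}Q^\star+(m+1)\mathcal{T}Q^\star\bigr)
$$
so that subtracting gives
$$
Q^\star-Q_k = \frac{1}{k}\bigl((Q^\star-Q_{k-1})+(k-m-1)(\mathcal{T}Q^\star-\mathcal{T}Q_{k-1})\bigr)+\frac{1}{k}\bigl((m+1)(\mathcal{T}Q^\star-\mathcal{T}Q_0)-(Q^\star-Q_0)+E_{k-1}\bigr).
$$
Taking norms and using the $\gamma$-contraction property~\eqref{eq:Contraction} of $\mathcal{T}$ together with $|k-m-1| = k-m-1$ for $k\geq m+1$, the first group is bounded by $\frac{1}{k}\bigl(\norm{Q^\star-Q_{k-1}}+(k-m-1)\gamma\norm{Q^\star-Q_{k-1}}\bigr) = \frac{1+\gamma(k-m-1)}{k}\norm{Q^\star-Q_{k-1}}$, and the ``source'' terms $(m+1)(\mathcal{T}Q^\star-\mathcal{T}Q_0)$, $Q^\star-Q_0$ are bounded by constants of the form $\gamma(m+1)\cdot 2V_{\max}$ and $2V_{\max}$ respectively (using $\norm{Q_k}\leq V_{\max}$ and $\norm{Q^\star}\leq V_{\max}$), which I will package into the constant $\gamma R_{\max}+hV_{\max}$ appearing in the statement via the relation $R_{\max}+\gamma V_{\max}=V_{\max}$ from Assumption~\ref{asp:boundQ} and $h=\gamma(m+1)+1$.

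Next I would set up a recursion. Writing $e_k := \norm{Q^\star-Q_k}$ and $C := 2(\gamma R_{\max}+hV_{\max})/(1-\gamma)$ (or a suitable such constant), the bound above takes the schematic form
$$
e_k \leq \frac{1+\gamma(k-m-1)}{k}\,e_{k-1} + \frac{C'}{k} + \frac{1}{k}\norm{E_{k-1}},
$$
valid for $k\geq m+1$. The coefficient $\frac{1+\gamma(k-m-1)}{k}$ is the crucial quantity: I would check that it is strictly less than $1$ (equivalently $1+\gamma(k-m-1)<k$, i.e. $(1-\gamma)k>1-\gamma(m+1)$, which holds for $k$ large given $\gamma m\geq 1$), and more usefully that the telescoping product $\prod_{j} \frac{1+\gamma(j-m-1)}{j}$ over a range of indices contracts geometrically with ratio essentially $\gamma$ per step after dividing out the $1/j$ factors. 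Concretely, iterating the recursion from $k$ down to $k-m-1$ (the point at which $k-m-1=0$ and the recursion terminates / the $\mathcal{T}Q_{k-1}$ coefficient vanishes), each backward step contributes one factor of $\gamma$ from the contraction and the $1/k$ prefactors telescope, yielding a sum $\frac{1}{k}\sum_{i=0}^{k-m-1}\gamma^i\norm{E_{k-i}}$ for the error terms and a geometric-series bound $\frac{1}{k}\sum_{i}\gamma^i \cdot C' \leq \frac{C'}{k(1-\gamma)}$ for the constant terms, matching the claimed $2\frac{\gamma R_{\max}+hV_{\max}}{k(1-\gamma)}$.

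The main obstacle I anticipate is bookkeeping the unrolling of the recursion cleanly: the coefficient in front of $e_{k-1}$ is $k$-dependent, not a fixed constant, so I need to verify carefully that when the recursion is expanded the product of coefficients $\prod_{j=k-m}^{k}\frac{1+\gamma(j-m-1)}{j}$ — and more generally the coefficient multiplying $e_{k-i}$ — is dominated by $\frac{\gamma^i}{k}$ (after appropriate cancellation of the factorial-like denominators), and that the residual boundary term involving $e_{k-m-1}$ (or $e_0$) can be absorbed into the constant $C$; this is where the hypothesis $\gamma m\geq 1$ and the precise form of $h=\gamma(m+1)+1$ do their work. Once the coefficient arithmetic is pinned down, assembling the geometric sums to get exactly the right-hand side of~\eqref{eq:pf5} is routine. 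I would handle the arithmetic by induction on $k$ (base case $k=m+1$, where the sum has one or two terms, using $Q_0=Q_{-1}$ and the $k=1$ identity from Lemma~\ref{lem:dynQ}), which sidesteps having to write the product formula explicitly.
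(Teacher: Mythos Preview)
Your approach is the same as the paper's: start from Lemma~\ref{lem:dynQ}, subtract the corresponding expansion of $Q^\star=\mathcal{T}Q^\star$, apply the triangle inequality and the contraction~\eqref{eq:Contraction}, and unroll the resulting one-step recursion down to index $m$.

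The one place where you leave things vague---the ``coefficient arithmetic'' you flag as the main obstacle---is exactly where the paper's proof is one line. The hypothesis $\gamma m\geq 1$ is equivalent to $1+\gamma(k-m-1)\leq \gamma(k-1)$, so your coefficient $\tfrac{1+\gamma(k-m-1)}{k}$ is simply bounded by $\tfrac{\gamma(k-1)}{k}$. With that replacement the recursion reads
\[
\|Q^\star-Q_k\|\leq \frac{\gamma(k-1)}{k}\|Q^\star-Q_{k-1}\|+\frac{2hV_{\max}}{k}+\frac{\|E_k\|}{k},
\]
i.e.\ $k\|Q^\star-Q_k\|\leq \gamma\,(k-1)\|Q^\star-Q_{k-1}\|+2hV_{\max}+\|E_k\|$, which telescopes with a pure factor $\gamma$ per step; there is no product of $k$-dependent factors to analyze, and no induction is needed. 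Unrolling to index $m$ and using $\gamma^{k-m}\|Q^\star-Q_m\|\leq 2\gamma V_{\max}=\tfrac{2\gamma R_{\max}}{1-\gamma}$ (from $k\geq m+1$ and $V_{\max}=R_{\max}/(1-\gamma)$) gives~\eqref{eq:pf5} directly. Once you spot this single inequality, the rest of your outline goes through without the bookkeeping you were anticipating.
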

\begin{proof}
For $k\geq m+1$, expand $Q_k$ using (\ref{eq:2termIter}) in Lemma~\ref{lem:dynQ}, and we have
\begin{align*}
&\left\Vert Q^{\star}-Q_{k}\right\Vert \\
= & \frac{1}{k}\Vert Q_{k-1}-Q_{0}+(k-m-1)(\mathcal{T}Q^{\star}-\mathcal{T}Q_{k-1})\\
 & +(m+1)(\mathcal{T}Q^{\star}-\mathcal{T}Q_{0})+E_{k}\Vert\\
\leq & \frac{\gamma(k-m-1)+1}{k}\Vert Q^{\star}-Q_{k-1}\Vert\\
 & +\frac{\gamma(m+1)+1}{k}\Vert Q^{\star}-Q_{0}\Vert+\frac{\Vert E_{k}\Vert}{k}\\
\leq & \frac{\gamma(k-1)}{k}\Vert Q^{\star}-Q_{k-1}\Vert+\frac{2h}{k}V_{\max}+\frac{\Vert E_{k}\Vert}{k}\\
\leq & \frac{\gamma^{k-m}}{k}\Vert Q^{\star}-Q_{m}\Vert+\frac{2hV_{\max}}{k}\sum_{i=0}^{k-m-1}\gamma^{i}\\
 & +\sum_{i=0}^{k-m-1}\frac{\gamma^{i}}{k}\Vert E_{k-i}\Vert\\
\leq & 2\frac{\gamma R_{\max}+hV_{\max}}{k(1-\gamma)}+\frac{1}{k}\sum_{i=0}^{k-m-1}\gamma^{i}\Vert E_{k-i}\Vert,
\end{align*}
where the first inequality follows from the triangle inequality and the contraction property~\eqref{eq:Contraction}, and the second inequality holds due to $\gamma m\geq 1$ and Assumption~\ref{asp:boundQ} and $h=\gamma(m+1)+1$.
\end{proof}

Now we are ready to prove the main results of Theorem~\ref{thm:main}. The proof builds on the results of Lemma~\ref{lem:errorProp} and makes uses of the Maximal Hoeffding-Azuma Inequality (see Lemma~\ref{lem:ineq}).

\begin{proof}
 Plugging $k=T$ in~\eqref{eq:pf5} in Lemma~\ref{lem:errorProp} and obtain
$$
    \left\Vert Q^{\star}-Q_{T}\right\Vert\leq 2\frac{\gamma R_{\max}+hV_{\max}}{T(1-\gamma)}+\frac{1}{T}\sum_{i=0}^{T-m-1}\gamma^{i}\Vert E_{T-i}\Vert.
$$
It suffices to bound the second term. Observe that 
\begin{equation}\label{eq:pf3}
    \begin{aligned}
    \frac{1}{T}\sum_{i=0}^{T-m-1}\gamma^{i}\Vert E_{T-i}\Vert
    &\leq\frac{1}{T}\sum_{i=0}^{T-m-1}\gamma^{i}\underset{0\leq i\leq T-m-1}{\max}\norm{E_{T-i}}\\
    &\leq \frac{\max_{0\leq i\leq T-m-1}\norm{E_{T-i}}}{(1-\gamma)T}.
    \end{aligned}
\end{equation}

In remains to bound $\max_{0\leq i\leq T-m-1}\norm{E_{T-i}}$. For the sake of convenience, we denote $K=T-m-1$. Notice that $\max_{0\leq i\leq K}\norm{E_{T-i}}=\max_{(x,u)} \max_{0\leq i\leq K}\lvert{E_{T-i}(x,u)}\rvert$. For a given 
$(x,u)$ and $\varepsilon>0$, we have
\begin{equation}\label{eq:pf1}
    \begin{aligned}
    &\mathbb{P}\left( \underset{0\leq i\leq K}{\max}\lvert E_{T-i}(x,u)\rvert>\varepsilon \right)\\
    =&\mathbb{P}\left( \left\{\underset{0\leq i\leq K}{\max}( E_{T-i}(x,u)) >\varepsilon\right\}\right.\\
    &\left.\bigcup \left\{\underset{0\leq i\leq K}{\max}( -E_{T-i}(x,u)) >\varepsilon\right\} \right)\\
    =&\mathbb{P}\left( \underset{0\leq i\leq K}{\max}( E_{T-i}(x,u))>\varepsilon \right) +\\ &\mathbb{P}\left( \underset{0\leq i\leq K}{\max}( -E_{T-i}(x,u))>\varepsilon \right),
    \end{aligned}
\end{equation}
where $D_{\max}$ is derived in Lemma~\ref{lem:boundDk}.
Since $\{\epsilon_k(x,u)\}_{k\geq 0}$ is a martingale difference sequence w.r.t the filtration $\mathcal{F}_k$ as defined previously, we can apply the Maximal Hoeffding-Azuma inequality (see Lemma~\ref{lem:ineq}) as
\begin{equation}\nonumber
    \begin{aligned}
    \mathbb{P}\left( \underset{0\leq i\leq K}{\max}( E_{T-i}(x,u))>\varepsilon\right) &\leq \exp\left( \frac{-\varepsilon^2}{8(K+1)D_{\max}^2} \right)\\
    \mathbb{P}\left( \underset{0\leq i\leq K}{\max}( -E_{T-i}(x,u))>\varepsilon\right) &\leq \exp\left( \frac{-\varepsilon^2}{8(K+1)D_{\max}^2} \right).
    \end{aligned}
\end{equation}
Then we can further bound~\eqref{eq:pf1} as
$$ \mathbb{P}\left( \underset{0\leq i\leq K}{\max}\lvert E_{T-i}(x,u)\rvert>\varepsilon \right)\leq 2\exp\left( \frac{-\varepsilon^2}{8(K+1)D_{\max}^2} \right).
$$
Since we consider a finite state-action space where the size of state-action pairs is bounded by $n$ as Assumption~\ref{asp:boundSpace}, we can eventually use the union bound to obtain
$$ \mathbb{P}\left( \underset{0\leq i\leq K}{\max}\lVert E_{T-i}\rVert>\varepsilon \right)\leq 2n\exp\left( \frac{-\varepsilon^2}{8(K+1)D_{\max}^2} \right).
$$
By letting $\delta = 2n\exp\left( \frac{-\varepsilon^2}{8TD_{\max}^2} \right)$ we have
$$ \mathbb{P}\left( \underset{0\leq i\leq K}{\max}\lVert E_{T-i}\rVert\leq D_{\max}\sqrt{8(K+1)\log \frac{2n}{\delta}} \right)\geq 1-\delta,
$$
where $K=T-m-1$.
By plugging the above probability bound in~\eqref{eq:pf3} we conclude our results.
\end{proof}



\end{document}